\renewcommand{\Re}{\mathrm{Re}}
\renewcommand{\Im}{\mathrm{Im}}
\renewcommand{\bar}{\overline}
\renewcommand{\tilde}{\widetilde}
\newcommand{\skri}{\mathscr{I}}
\newcommand{\ii}{\mathsf{i}}
\newcommand{\sx}{\mathsf{x}}
\newcommand{\sy}{\mathsf{y}}
\newcommand{\rr}[1]{\left(#1\right)}
\newcommand{\bx}{{\bm{x}}}
\newcommand{\bk}{{\bm{k}}}
\newcommand{\supp}{\text{supp}}
\newcommand{\rao}{\hat{\rho}_{\textsc{a}}^{0}}
\newcommand{\rbo}{\hat{\rho}_{\textsc{b}}^{0}}
\newcommand{\rof}{\hat{\rho}_{\phi}^{0}}
\newcommand{\roo}{\hat{\rho}^{0}}
\newcommand{\R}{\mathbb{R}}
\newcommand{\C}{\mathbb{C}}
\newcommand{\M}{\mathcal{M}}
\newcommand{\A}{\mathcal{A}}
\newcommand{\W}{\mathcal{W}}
\newcommand{\CS}{C^\infty_0(\M)}
\newcommand{\Sol}{\mathsf{Sol}}
\newtheorem{lemma}{Lemma}
\begin{document}

\title{Fermi two-atom problem: non-perturbative approach via relativistic quantum information and algebraic quantum field theory}

\author{Erickson Tjoa}
\email{e2tjoa@uwaterloo.ca}
\affiliation{Department of Physics and Astronomy, University of Waterloo, Waterloo, Ontario, N2L 3G1, Canada}
\affiliation{Institute for Quantum Computing, University of Waterloo, Waterloo, Ontario, N2L 3G1, Canada}

\begin{abstract}
    In this work we revisit the famous Fermi two-atom problem, which concerns how relativistic causality impacts atomic transition probabilities, using the tools from relativistic quantum information (RQI) and algebraic quantum field theory (AQFT). The problem has sparked different analyses from many directions and angles since the proposed solution by Buchholz and Yngvason (1994). Some of these analyses employ various approximations, heuristics, perturbative methods, which tends to render some of the otherwise useful insights somewhat obscured. It is also noted that they are all studied in flat spacetime. We show that current tools in relativistic quantum information, combined with algebraic approach to quantum field theory, are now powerful enough to provide fuller and cleaner analysis of the Fermi two-atom problem for arbitrary curved spacetimes in a completely non-perturbative manner. Our result gives the original solution of Buchholz and Yngvason a very operational reinterpretation in terms of qubits interacting with a quantum field, and allows for various natural generalizations and inclusion of detector-based local measurement for the quantum field (\href{https://journals.aps.org/prd/abstract/10.1103/PhysRevD.105.065003}{PRD \textbf{105}, 065003}).

\end{abstract}

\maketitle

\section{Introduction}

Fermi's two atom problem is the thought experiment proposed by Enrico Fermi \cite{Fermi1932radiation} and attracted much attention after being revisited sixty years later by Hegerfeldt \cite{Hegerfeldt1994fermi}. The problem states that there is an apparent tension between quantum electrodynamics and Einstein causality in the following thought experiment. Suppose we have two atoms $A$ and $B$ are separated by some distance $r$. If at some time $t=0$ atom $A$ is excited, atom $B$ is in ground state, and no photon is present (the electromagnetic field is in its vacuum state). The claim is that immediately at $t=0^+$, there is nonzero probability that atom $B$ gets excited, in apparent violation of Einstein causality that imposes maximum speed of causal propagation at the speed of light. 

It was already apparent since the work of Buchholz and Yngvason \cite{Buchholz1994fermiatom} that there were some problems with Fermi's original thought experiment. In essence, the issue is that one needs to calculate the excitation probability of $B$ irrespective of the initial states of $A$ and the electromagnetic field, and furthermore without ``damaging'' approximations such as the rotating wave approximations. Furthermore, one should not invoke any procedure that involves ``projective measurement'' of the electromagnetic field state, which already leads to causality violation for related but different reason \cite{sorkin1956impossible}. Note that the proposed ``solutions'' by Hegerfeldt are perhaps asking too much to any practising experimentalists: to name one, they suggest that it may be unphysical to demand statistical independence of \textit{state preparation} of the two atoms in full quantum electrodynamics.  Overall, the consensus since the result of \cite{Buchholz1994fermiatom} is that there is no causality violation, so for all practical purposes we can say that the problem has been ``solved''.

Nonetheless, we believe that the existing analyses and solutions are somewhat unsatisfactory for at least two reasons. First, while Hegerfeldt's arguments in \cite{Hegerfeldt1994fermi} are powerful in that they only appeal to positivity of the Hamiltonian, in practical situations many state preparation procedures using realistic systems should not require one to carefully track ``soft photon cloud'' (nowadays known as preparation of dressed states) to make useful experiments\footnote{Of course there are experiments where dressing is important or if the objective is to work with dressed states (see, e.g., \cite{kato2019observation} and references therein).}  (see, e.g., \cite{Leon2011fermi}). Similarly, Buchholz and Yngvason's solutions, obtained using ``primitive causality'' and ``causal (statistical) independence'', rests on the use of expectation values of observables which would only hold in the limit of infinitely many repeated experiments \cite{hegerfeldt1994problems}. Both approaches seem to at least agree that strict localization of states are untenable. We believe that while the analyses are valid, they are perhaps too strong for illustrating relativistic causality in light-matter interaction type of scenarios, both theoretically and experimentally.

Second, many of the analyses that involve the use of non-relativistic \textit{detectors} use perturbative expansions and approximations that need to be carefully controlled. For example, Buscemi and Compagno \cite{Buscemi2009nonlocal} showed using perturbation theory that the Fermi problem analysed via a pointlike Unruh-DeWitt detector \cite{Unruh1979evaporation,DeWitt1979} is consistent with causality\footnote{Strictly speaking the calculation of probability in \cite{Buscemi2009nonlocal} is ultraviolet-divergent due to pointlike limit and sharp switching of the detector, thus some form of ``renormalization'' is required.}, while a Glauber detector \cite{Glauber1963photon}---essentially UDW detector with \textit{rotating wave approximation} (RWA)---is not. Some other related analyses that pertain Fermi problem as common theme are given in, for instance, \cite{Dickinson2016fermi, Buscemi2006localization,Borrelli2012fermi,Zohar2011fermidiscrete,Power1997energy,Menezes2017disordered}. 

In any case, subsequent studies via Unruh-DeWitt detector model, which treats two-level quantum system (``qubits'') as a detector interacting with a quantum field, more or less settle the debate in perturbative regime in flat spacetimes, with or without cavity \cite{Cliche2010channel, Jonsson2014signalQED,Leon2011fermi,Causality2015Eduardo}. These more modern and careful works are notable for being closer to quantum information theory where cleaner tools are available and experiments are quite advanced. Furthermore, it is now well-understood that within perturbative regime, there are some (mild) causality violation for detector models with \textit{finite size} that is completely separate from the Fermi problem itself, but rather related to non-relativistic nature of any model of quantum-mechanical detector \cite{Causality2015Eduardo,Charis2020causality,Maria2021faster,Tales2020GRQO,Bruno2021broken}. Modern analysis of the impact of rotating wave approximations on causality outside the context of Fermi problem has also been investigated in \cite{Funai2019RWA,Dolce2006cavity,clerk1998nonlocality}. 

In this work we would like to clean up all the existing analyses about the Fermi two-atom problem by providing a fully non-perturbative calculation using Unruh-DeWitt detector model very commonly used in \textit{relativistic quantum information} (RQI) combined with the framework of \textit{algebraic quantum field theory} (AQFT). This allows us to (1) generalize the Fermi two-atom problem to arbitrary globally hyperbolic curved spacetimes where relativistic causality still needs to hold; (2) remove any issue having to do with perturbative expansions and potentially contentious and harder-to-control approximations; and (3) give a much more economical expressions using the language in quantum information (QI) and AQFT. In fact, the economy of expression is essentially due to the fact that the calculation is very closely related to how we study quantum communication channels in quantum information theory. The approach is based on the  ``delta-coupling'' Unruh-DeWitt detector model where the detectors interact with the field very rapidly, effectively at a single instant in time (for some applications of delta coupling in RQI, see \cite{jonsson2018transmitting,tjoa2022channel,Simidzija2020transmit} for relativistic quantum communication, or \cite{Sahu2022sabotaging,Gallock2021nonperturbative,Simidzija2018nogo,Henderson2020temporal} for correlation harvesting).

At this point we should mention that by ``non-perturbative regime'' we mean that the interaction picture unitary evolution induced by the detector-field interaction Hamiltonian is a finite sum of bounded operators, without truncation of Dyson series expansion at some finite powers of coupling strengths, as is usually done in many situations where the interaction involves weak coupling. This relies on physical arguments that we can remove the time-ordering operation in this ``rapid coupling'' regime. We do \textit{not} mean that this calculation is non-perturbative in the sense that we solved \textit{exactly} for the full interacting theory of the detector-field system as a dynamical system, which is a generally difficult problem related to Haag's theorem and existence of interaction picture representations in general (see, e.g., \cite{fewster2020algebraic} and refs. therein for discussions). We will thus refer to the latter as having \textit{exact solution} for the dynamical system, while here we give a \textit{non-perturbative solution}, i.e., non-perturbative in the usual sense of having no truncation of any series expansion. In this context, we may also regard delta-coupling regime as a limit where effectively we are performing ``resummation'' of the series expansion of the unitary evolution.

As a side bonus, we will show that there is no difficulty in extending the calculations to include non-vacuum states of the field, and for quasifree states (field states with vanishing one-point functions) essentially no extra calculation needs to be done at all. While non-vacuum states may obscure the interpretation of the Fermi problem, we will see that one can really separate the causal propagation of information from the non-local correlations due to highly entangled nature of quantum field states --- this is essentially the case because causal nature of QFT is \textit{state-independent} and only the non-local correlations in the field is state-dependent \cite{Tjoa2021notharvesting}. It is part of the author's intention to bridge and popularize the two approaches from RQI and AQFT and make them work more seamlessly (see also, e.g., \cite{tjoa2022channel,Landulfo2016magnus1}). We also show briefly that local measurement theory for quantum field theory outlined in \cite{josepolo2022measurement} naturally fits the Fermi problem in the non-perturbative approach.

We remark that the approach we adopt here should give a  more complete closure to the Fermi problem in a manner that is as transparent as  possible. The only real (significant) simplification we make in this work is the use of \textit{scalar field} rather than actual electromagnetic field, hence monopole detector rather than atomic dipole (more general UDW model involving dipole-electromagnetic interaction was considered, e.g., in \cite{pozas2016entanglement,Lopp2021deloc}), since in the Fermi problem the ``directionality'' of the vector field is irrelevant to the discussion of relativistic causality. Extensions to more complicated coupling should be straightforward.

Our paper is organized as follows. In Section~\ref{sec: AQFT} we review algebraic approach to scalar QFT in $(3+1)$-dimensional globally hyperbolic Lorentzian spacetime. In Section~\ref{sec: UDW} we introduce the delta-coupled UDW detector model to set up the Fermi two-atom problem. In Section~\ref{sec: Fermi} we revisit the Fermi problem and show that it is consistent with relativistic causality in fully non-perturbative manner, valid for any globally hyperbolic curved spacetime and for any choice of quasifree Hadamard states for the field. In Section~\ref{sec: measurement} we include brief discussion on what happens to the Fermi problem setup if some actual measurement process is performed by one of the detectors. In Section~\ref{sec: conclusion} we will comment on several directions for further investigations. 

We adopt the natural units $c=\hbar=1$ and mostly-plus signature for the metric. We adopt physicists' convention of denoting Hermitian conjugate by $\dagger$, but we retain the term $C^*$-algebra where $*$-involution is mathematicians' version of Hermitian conjugation; context should make things clear.

\section{AQFT for scalar field in curved spacetime}
\label{sec: AQFT}

In order to ensure that this work is self-contained, in this section we briefly review the algebraic framework for quantization of real scalar field in arbitrary (globally hyperbolic) curved spacetimes. We will follow the convention in \cite{tjoa2022modest}. Note that in the AQFT literature there are various different conventions being used (see, e.g., \cite{wald1994quantum,Dappiaggi2005rigorous-holo,Moretti2005BMS-invar,Khavkhine2015AQFT,fewster2019algebraic}). A highly recommended accessible introduction to $*$-algebras and $C^*$-algebras for applications in this setting can be found in \cite{fewster2019algebraic} (see also \cite{hollands2017entanglement}). 

Readers are recommended to skip to Section~\ref{sec: UDW} or Section~\ref{sec: Fermi} if they are more interested in the qubit-side of the Fermi two-atom problem and less on the field-theoretic side, referring to this section when certain details need to be consulted.

\subsection{Algebra of observables and states}

We consider a free, real scalar field $\phi$ in (3+1)-dimensional globally hyperbolic Lorentzian spacetime $(\mathcal{M},g_{ab})$. The field obeys the Klein-Gordon equation
\begin{align}
     P\phi = 0\,,\quad  P = \nabla_a\nabla^a - m^2  - \xi R\,,
     \label{eq: KGE}
\end{align}
where $\xi \geq 0$, $R$ is the Ricci scalar and  $\nabla$ is the Levi-Civita connection with respect to $g_{ab}$. Global hyperbolicity ensures that $\M$ admits foliation by spacelike Cauchy surfaces $\Sigma_t$ labelled by time parameter $t$. It also guarantees that the wave equation \eqref{eq: KGE} has well-posed initial value problem throughout $\M$.

Let $f\in \CS$ be a smooth compactly supported test function on $\M$. The retarded and advanced propagators $E^\pm\equiv E^\pm(\sx,\sy)$ associated to the Klein-Gordon operator $P$ are Green's functions obeying
\begin{align}
    E^\pm f\equiv (E^\pm f)(\sx) \coloneqq \int \dd V'\, E^\pm (\sx,\sx')f(\sx') \,,
\end{align}
where we have the inhomogeneous equation $P(E^\pm f) = f$. Here $\dd V' = \dd^4\sx'\sqrt{-g}$ is the invariant volume element. The \textit{causal propagator} is defined to be the advanced-minus-retarded propagator $E=E^--E^+$. If $O$ is an open neighbourhood of some Cauchy surface $\Sigma$ and $\varphi$ is any real solution with compact Cauchy data to Eq.~\eqref{eq: KGE}, denoted by $\varphi \in \Sol_\R(\M)$, then there exists $f\in \CS$ with $\supp(f)\subset O$ such that $\varphi=Ef$ \cite{Khavkhine2015AQFT} (see also \cite{dimock1980algebras} for more details on the causal propagator).

Next, we review algebraic approach for the real scalar quantum field theory {(for comparison with canonical quantization formulation see, e.g., \cite{fewster2019algebraic,Khavkhine2015AQFT,KayWald1991theorems}).} In AQFT, the quantization of real scalar field is regarded as an $\R$-linear mapping from the space of smooth compactly supported test functions to a unital $*$-algebra $\A(\M)$ given by
\begin{align}
    \hat\phi: C^\infty_0(\mathcal{M})&\to \A(\M)\,,\quad f\mapsto \hat\phi(f)\,,
\end{align}
which obeys the following conditions:
\begin{enumerate}[leftmargin=*,label=(\alph*)]
    \item (\textit{Hermiticity}) $\hat\phi(f)^\dag = \hat\phi(f)$ for all $f\in \CS$;
    \item (\textit{Klein-Gordon}) $\hat\phi(Pf) = 0$ for all $f\in \CS$;
    \item (\textit{Canonical commutation relations}  (CCR)) $[\hat\phi(f),\hat\phi(g)] = \ii E(f,g)\openone $ for all $f,g\in \CS$, where $E(f,g)$ is the smeared causal propagator
    \begin{align}
        E(f,g)\coloneqq \int \dd V f(\sx) (Eg)(\sx)\,.
    \end{align}
    \item (\textit{Time slice axiom}) Let $\Sigma\subset \M$ be a Cauchy surface and $O$ a fixed open neighbourhood of $\Sigma$. $\A(\M)$ is generated by the unit element $\openone$ (hence $\A(\M)$ is unital) and the smeared field operators $\hat\phi(f)$ for all $f\in \CS$ with $\supp(f)\subset O$.
\end{enumerate}
The $*$-algebra $\A(\M)$ is called the \textit{algebra of observables} of the scalar field. The \textit{smeared} field operator reads
\begin{align}\label{eq: ordinary smearing}
    \hat\phi(f) = \int \dd V\hat\phi(\sx)f(\sx)\,,
\end{align}
where the object $\hat\phi(\sx)$ is to be regarded as an operator-valued distribution. 

The algebra of observables $\A(\M)$ will appear more concrete if we make its connection with the symplectic structure of the theory more explicit. First, the vector space $\Sol_\R(\M)$ can be made into a \textit{symplectic} vector space by equipping it with a symplectic form $\sigma:\Sol_\R(\M)\times\Sol_\R(\M)\to \R$, defined as
\begin{align}
    \sigma(\phi_1,\phi_2) \coloneqq \int_{\Sigma_t}\!\! {\dd\Sigma^a}\,\Bigr[\phi_{{1}}\nabla_a\phi_{{2}} - \phi_{{2}}\nabla_a\phi_{{1}}\Bigr]\,,
    \label{eq: symplectic form}
\end{align}
where $\dd \Sigma^a = -t^a \dd\Sigma$, $-t^a$ is the inward-directed unit normal to the Cauchy surface $\Sigma_t$, and $\dd\Sigma = \sqrt{h}\,\dd^3\bx$ is the induced volume form on $\Sigma_t$ \cite{Poisson:2009pwt,wald2010general}. This definition is independent of the Cauchy surface. With this, we can regard $\hat\phi(f)$ as \textit{symplectically smeared field operator}  \cite{wald1994quantum} 
\begin{align}
    \label{eq: symplectic smearing}
    {\hat\phi(f) \equiv \sigma(Ef,\hat\phi)\,,}
\end{align}
and the CCR algebra can be written as 
\begin{align}
    {[\sigma(Ef,\hat\phi),\sigma(Eg,\hat\phi)] = \ii\sigma(Ef,Eg)\openone = \ii E(f,g)\openone \,,}
\end{align}
where $\sigma(Ef,Eg) = E(f,g)$ in the second equality follows from Eq.~\eqref{eq: ordinary smearing} and \eqref{eq: symplectic smearing}. The symplectic smearing has the advantage of keeping the dynamical content manifest at the level of the field operators (via the causal propagator). Thinking of the field operator as symplectically smeared field operator can be useful in some contexts, such as when studying scalar QFT at $\skri^+$ (see, e.g., \cite{Landulfo2021cost,tjoa2022modest}).

It is often more convenient to work with the ``exponentiated'' version of $\A(\M)$ called the \textit{Weyl algebra} (denoted by $\W(\M)$), since its elements are (formally) bounded operators. The Weyl algebra $\W(\M)$ is a unital $C^*$-algebra generated by elements which formally take the form 
\begin{align}
    W(Ef) \equiv 
    {e^{\ii\hat\phi(f)}}\,,\quad f\in \CS\,.
    \label{eq: Weyl-generator}
\end{align}
These elements satisfy \textit{Weyl relations}:
\begin{equation}
    \begin{aligned}
    W(Ef)^\dagger &= W(-Ef)\,,\\
    W(E (Pf) ) &= \openone\,,\\
    W(Ef)W(Eg) &= e^{-\frac{\ii}{2}E(f,g)} W(E(f+g))
    \end{aligned}
    \label{eq: Weyl-relations}
\end{equation}
where $f,g\in \CS$. Note that the \textit{microcausality} condition (also known as \textit{relativistic causality} or \textit{Einstein causality}) can be seen as follows: using the Weyl relations Eq.~\eqref{eq: Weyl-relations} and the fact that $\supp(Ef)\subset J^+(\supp(f))$ where $J^+(\supp(f))$ is the causal future of $\supp(f)$, we have \cite{Dappiaggi2005rigorous-holo}
\begin{align}
    [W(Ef),W(Eg)] = 0 \,,
    \label{eq: Weyl-commutator-bulk}
\end{align}
whenever $\supp(f)\,\cap\,\supp(g) = \emptyset$ (supports of $f$ and $g$ are causally disjoint, i.e., ``spacelike separated'')\footnote{Abstractly, we would consider elements of the Weyl algebra to be $W(\phi)$ for some $\phi\in \Sol_\R(\M)$. In this form, however, microcausality is less obvious because the third Weyl relation would have read $W(\phi_1)W(\phi_2) = e^{{-}\ii\sigma(\phi_1,\phi_2)/2}W(\phi_1 + \phi_2)$. }.

After specifying the algebra of observables, we need to provide a quantum state for the field theory. In AQFT the state is called an \textit{algebraic state}, defined by a $\C$-linear functional $\omega:\A(\M)\to \C$ (similarly for $\W(\M)$) such that 
\begin{align}
    \omega(\openone) = 1\,,\quad  \omega(A^\dagger A)\geq 0\quad \forall A\in \A(\M)\,.
    \label{eq: algebraic-state}
\end{align}
That is, a quantum state is normalized to unity and positive-semidefinite operators have non-negative expectation values. The state $\omega$ is pure if it cannot be written as $\omega= \alpha \omega_1 + (1-\alpha)\omega_2$ for any $\alpha\in (0,1)$ and any two algebraic states $\omega_1,\omega_2$; otherwise we say that the state is mixed. For the Weyl algebra, the same definition works but with element $A\in \W(\M)$.

The connection to the usual notion of Hilbert space (Fock space) in canonical quantization comes from the Gelfand-Naimark-Segal (GNS) reconstruction theorem \cite{wald1994quantum,Khavkhine2015AQFT,fewster2019algebraic}. This says that we can construct a \textit{GNS triple}\footnote{Strictly speaking we also need to provide a dense subset $\mathcal{D}_\omega\subset \mathcal{H}_\omega$ since the field operators are unbounded operators.} $(\mathcal{H}_\omega, \pi_\omega,{\ket{\Omega_\omega}})$, where $\pi_\omega: \mathcal{\A(\M)}\to {\text{End}(\mathcal{H}_\omega)}$ is a Hilbert space representation with respect to state $\omega$ such that any algebraic state $\omega$ can be realized as a \textit{vector state} {$\ket{\Omega_\omega}\in\mathcal{H}_\omega$}. The observables $A\in \A(\M)$ are then represented as operators $\hat A\coloneqq \pi_\omega(A)$ acting on the Hilbert space. With the GNS representation, the action of algebraic states take the familiar form
\begin{align}
    \omega(A) = \braket{\Omega_\omega|\hat A|\Omega_\omega}\,.
\end{align}
The main advantage of the AQFT approach is that it is independent of the representations of the CCR algebra chosen: there are as many representations as there are algebraic states $\omega$. Since QFT in curved spacetimes admits infinitely many unitarily inequivalent representations of the CCR algebra, this allows us to work with all of them simultaneously until the very last step.

For the Weyl algebra, the algebraic state and GNS representation gives concrete realization of ``exponentiation of $\hat\phi(f)$'' as a bounded operator acting on the Hilbert space. We remind the reader that the exponentiation in Eq.~\eqref{eq: Weyl-generator} is only formal as we cannot literally regard the smeared field operator $\hat\phi(f)$ as the derivative $\partial_t\bigr|_{t=0}W(t Ef)$. This is because the Weyl algebra itself does not have the right topology for this to work out \cite{fewster2019algebraic}; however, one can take the derivative at the level of the GNS representation: that is, if $\Pi_\omega:\W(\M)\to \mathcal{B}(\mathcal{H}_\omega)$ is a GNS representation with respect to $\omega$, then we do have 
\begin{align}
    \pi_\omega(\hat\phi(f)) &= -\ii\frac{\dd}{\dd t}\Bigg|_{t=0}\!\!\!\!\!\!\!\Pi_\omega(e^{\ii t \hat\phi(f)}) \equiv - \ii\frac{\dd}{\dd t}\Bigg|_{t=0}\!\!\!\!\!\!\!e^{\ii t \pi_\omega(\hat\phi(f))} \,,
\end{align}
where now $\hat\phi(f)$ is smeared field operator acting on Hilbert space $\mathcal{H}_\omega$. This is then taken to \textit{define} the formal $n$-point functions to be the expectation value in its GNS representation. For example, \textit{the smeared Wightman two-point function} reads
\begin{align}\label{eq: Wightman-formal-bulk}
    &\omega \bigr(\hat\phi(f)\hat\phi(g)\bigr) \coloneqq \braket{\Omega_\omega|\pi_\omega(\hat\phi(f))\pi_\omega(\hat\phi(g))|\Omega_\omega}\notag \\
    &\equiv -\frac{\partial^2}{\partial s\partial t}\Bigg|_{s,t=0}\!\!\!\!\!\!\!\!\braket{\Omega_\omega|e^{\ii s\pi_\omega(\hat\phi(f))}e^{\ii t\pi_\omega(\hat\phi(g))}|\Omega_\omega}\,,
\end{align}
which we also denote by $\mathsf{W}(f,g)$. In what follows we will write the formal two-point functions $\omega(\hat\phi(f)\hat\phi(g))$ with this understanding that the actual calculation is (implicitly) done with respect to the GNS representation in question: for instance, in Minkowski spacetime the vacuum GNS representation would give the Minkowski vacuum $\ket{\Omega_\omega} = \ket{0_\textsc{M}}$.

It is worth noting that not every field state is expressible as a density matrix in a particular GNS representation. For example, thermal state at temperature $\beta^{-1}$ cannot be written as $\hat\rho_\beta = e^{-\beta\hat H}/Z$ since the partition function $Z$ is divergent and hence the expression is only formal. In the algebraic framework, this problem is avoided since the Kubo-Martin-Schwinger (KMS) condition \cite{Kubo1957thermality,Martin-Schwinger1959thermality} characterizes thermal states via the correlation functions. All we need to know is how to write the corresponding algebraic state $\omega_\beta(\cdot)$. In general, given a GNS representation $\pi_\omega$ associated to an algebraic state $\omega$, the family of density matrices $\hat\rho\in \mathscr{D}(\mathcal{H}_\omega)$ in that GNS representation defines a family of algebraic states $\hat\rho \mapsto \omega_{\hat{\rho}}$: such family of states $\{\omega_{\hat\rho}\}$ for which a valid density matrix exists in $\mathcal{H}_\omega$ are called \textit{normal states}\footnote{Not all algebraic states associated to a GNS representation are expressible in terms of a density matrix: thermal states $\omega_\beta$ are therefore not a normal state of $\Pi_\omega(\W(\M))$ (viewed as a von Neumann algebra) where $\Pi_\omega$ is the vacuum representation \cite{hollands2017entanglement}.} in the folium of $\Pi_\omega$ of the Weyl algebra \cite{hollands2017entanglement}.

\subsection{Quasifree states}

In the AQFT approach there are too many algebraic states and not all of them are physically relevant. The general consensus is that physically reasonable states $\omega$ should fall under the class of \textit{Hadamard states} \cite{Khavkhine2015AQFT,KayWald1991theorems,Radzikowski1996microlocal}. Roughly speaking, these states have the right ``singular structure'' at short distances that respects the local flatness property in general relativity and has finite expectation values over all observables (see \cite{KayWald1991theorems} and references therein for more technical details). In this work, we would like to work with Hadamard states that are also \textit{quasifree}, denoted by $\omega_\mu$: these are the states which can be completely described only their two-point correlators\footnote{This means all odd-point functions vanish, non-vanishing two-point functions $\omega(\hat\phi(f)\hat\phi(g))\neq 0$, and all even-point functions can be written as linear combination of products of two-point functions. The term \textit{Gaussian states} are sometimes reserved for those that can be completely specified by its one-point and two-point functions.}. Well-known field states such as the vacuum states and thermal states are all quasifree states, with thermal states (thermality defined according to Kubo-Martin-Schwinger (KMS) condition \cite{KayWald1991theorems}) being an example of mixed quasifree state.

Let us now review the construction of quasifree state following \cite{tjoa2022channel} (based on \cite{KayWald1991theorems,Khavkhine2015AQFT,fewster2019algebraic}). Any quasifree state $\omega_\mu$ is associated to a {real inner product} $\mu:\Sol_\R(\M)\times \Sol_\R(\M)\to \R$ obeying the inequality 
\begin{align}
    |\sigma(Ef,Eg)|^2\leq 4\mu(Ef,Ef)\mu(Eg,Eg)\,,
    \label{eq: real-inner-product}
\end{align}
for any $f,g\in \CS$. The state is pure if it saturates the above inequality appropriately \cite{wald1994quantum}. Then the quasifree state $\omega_\mu$ is defined as 
\begin{align}
    \omega_\mu(W(Ef)) \coloneqq e^{-\mu(Ef,Ef)/2}\,.
    \label{eq: quasifree}
\end{align}
We will drop the subscript $\mu$ and simply write $\omega$ in what follows (unless otherwise stated). Note that Eq. \eqref{eq: quasifree} is not very useful since it does not give us a way to evaluate $\mu(Ef,Ef)$.

In order to obtain practical expression for the norm-squared $||Ef||^2 \coloneqq \mu(Ef,Ef)$, we  make $\Sol_\R(\M)$ into a Hilbert space in the following sense\footnote{We will assume that the Hilbert space is already completed via its inner product.}. In \cite{KayWald1991theorems} it was shown that we can always construct a  \textit{one-particle structure} associated to quasifree state $\omega_\mu$, namely a pair $(K,\mathcal{H})$, where $\mathcal{H}$ is  a Hilbert space $(\mathcal{H},\braket{\cdot,\cdot}_\mathcal{H})$ together with an $\R$-linear map $K:\Sol_\R(\M)\to \mathcal{H}$ such that for $\phi_1,\phi_2\in \Sol_\R(\M)$
\begin{enumerate}[leftmargin=*,label=(\alph*)]
    \item $K\Sol_\R(\M)+\ii K\Sol_\R(\M)$ is dense in $\mathcal{H}$;
    \item $\mu(\phi_1,\phi_2)=\Re\braket{K\phi_1,K\phi_2}_\mathcal{H}$;
    \item $\sigma(\phi_1,\phi_2) = 2\Im\braket{K\phi_1,K\phi_2}_\mathcal{H}$.
\end{enumerate}
In the language of canonical quantization, the linear map $K$ projects out the ``positive frequency part'' of real solution to the Klein-Gordon equation. The smeared Wightman two-point function $\mathsf{W}(f,g)$ is then related to $\mu,\sigma$ by \cite{KayWald1991theorems,fewster2019algebraic}
\begin{align}
    \hspace{-0.05cm}\mathsf{W}(f,g) &\coloneqq \omega(\hat\phi(f)\hat\phi(g)) = \mu(Ef,Eg) + \frac{\ii}{2}E(f,g)\,,
\end{align}
where we have used the fact that $\sigma(Ef,Eg) = E(f,g)$. We also  have $E(f,f)=0$, thus we get
\begin{align}
    ||Ef||^2 =  \mathsf{W}(f,f) = \braket{KEf,KEf}_\mathcal{H}\,.
    \label{eq: algebraic-norm}
\end{align}
Therefore, we can compute $\mu(Ef,Ef)$ if either (i) we know the (unsmeared) Wightman two-point distribution of the theory associated to some quantum field state, or (ii) we know the inner product $\braket{\cdot,\cdot}_\mathcal{H}$ and how to project using $K$.

The inner product $\braket{\cdot,\cdot}_\mathcal{H}$ is precisely the \textit{Klein-Gordon inner product} $(\cdot,\cdot)_{\textsc{kg}}:\Sol_\C(\M)\times\Sol_\C(\M)\to \mathbb{C}$ restricted to $\mathcal{H}$, defined by
\begin{align}
    (\phi_1,\phi_2)_\textsc{kg} \coloneqq \ii \sigma(\phi_1^*,\phi_2)\,,
    \label{eq: KG-inner-product}
\end{align}
where the symplectic form is now extended to \textit{complexified} solution $\Sol_\C(\M)$ of the Klein-Gordon equation. The restriction to $\mathcal{H}$ is necessary since $(\cdot,\cdot)_\textsc{kg}$ is not an inner product on $\Sol_\C(\M)$. In particular, we have
\begin{align}
    \Sol_\C(\M) \cong \mathcal{H}\oplus \overline{\mathcal{H}}\,,
\end{align}
where $\overline{\mathcal{H}}$ is the complex conjugate Hilbert space of $\mathcal{H}$ \cite{wald1994quantum}. It follows that Eq.~\eqref{eq: quasifree} can be written as
\begin{align}
    \omega_\mu(W(Ef)) =   e^{-{\frac{1}{2}}\mathsf{W}(f,f)} = e^{-{\frac{1}{2}}||KEf||^2_{\textsc{kg}}}\,.
    \label{eq: norm-Ef}
\end{align}

The expression in Eq.~\eqref{eq: norm-Ef} gives us a concrete way to calculate $||Ef||^2$ more explicitly. We know that the (unsmeared) vacuum Wightman two-point distribution can be calculated in canonical quantization, which reads
\begin{align}
    \mathsf{W}(\sx,\sy) &= \int \dd^3\bk\, u^{\phantom{*}}_\bk(\sx) u^*_\bk(\sy)\,,
\end{align}
where $u_\bk(\sx)$ are (positive-frequency) modes of Klein-Gordon operator $P$ normalized with respect to Klein-Gordon inner product Eq.~\eqref{eq: KG-inner-product}:
\begin{equation}
    \begin{aligned}
    (u_\bk,u_{\bk'})_\textsc{kg} &= \delta^3(\bk-\bk')\,,\quad (u_\bk^{\phantom{*}},u^*_{\bk'})_\textsc{kg} = 0\,,\\
    (u_\bk^*,u^*_{\bk'})_\textsc{kg} &= -\delta^3(\bk-\bk')\,.
    \end{aligned}
    \label{eq: KG-normalization}
\end{equation}
If we know the set $\{u_\bk\}$, we can  calculate the symmetrically smeared two-point function
\begin{align}
    \mathsf{W}(f,f) = \int \dd V\,\dd V' f(\sx)f(\sy)\mathsf{W}(\sx,\sy)\,.
    \label{eq: Wightman-double-smeared}
\end{align}
In other words, for all practical purposes knowing the Klein-Gordon norm of $KEf$ or the symmetrically smeared Wightman two-point functions is enough to completely work out many calculations for quasifree states. One can also extend the calculation to include complex smearing (see, e.g., \cite{fewster2019algebraic}, for this).

\section{Delta-coupled Unruh-DeWitt (UDW) detector model}
\label{sec: UDW}

In this section we first review the covariant generalization of the Unruh-DeWitt (UDW) detector model, as done in the spirit of \cite{Tales2020GRQO,Bruno2021broken,tjoa2022channel}. The UDW detector model is a simplification of light-matter interaction where the dipole-electromagnetic interaction is reduced to a scalar version. This model gives a good approximation of light-matter interaction when no angular momentum is exchanged (since scalar field has no vectorial component) \cite{pozas2016entanglement}.

\subsection{Covariant UDW detector model}

Consider two observers Alice and Bob each carrying an Unruh-DeWitt (UDW) detector in spacetime $\M$. The UDW detector is a two-level system (``a qubit'') with free Hamiltonian given by
\begin{align}
    \mathfrak{h}_j &= \frac{\Omega_j}{2}(\hat\sigma^z_j+\openone)\,,\quad j=A,B
\end{align}
where $\hat{\sigma}^z_j$ is the usual Pauli-$Z$ operator for detector $j$, whose ground and excited states $\ket{g_j},\ket{e_j}$ have energy $0,\Omega_j$ respectively. Let $\tau_j$ be the proper time of detector $j$ whose centre of mass travels along the worldline $\sx_j(\tau_j)$. \textit{A priori} the proper times may not coincide (i.e., the sense that $\dd\tau_A/\dd\tau_B\neq 1$) due to relativistic redshift caused by relative motion or different gravitational potential.

The covariant generalization of the UDW model is given by the following interaction Hamiltonian four-form (in interaction picture) \cite{Tales2020GRQO}
\begin{align}
    h_{I,j}(\sx) &= \dd V\,f_j(\sx)\hat\mu_j(\tau_j(\sx))\otimes\hat\phi(\sx)\,,
\end{align}
where $\dd V = \dd^4\sx \sqrt{-g}$ is the invariant volume element in $\M$, $f_j(\sx)\in\CS$ prescribes the interaction region between detector $j$ and the field. The monopole moment of detector $j$, denoted $\hat\mu_j(\tau_j)$, is given by
\begin{align}
    \hat\mu_j(\tau_j) &= \hat \sigma^x_j(\tau_j) = \hat\sigma^+_j e^{\ii\Omega_j \tau_j} + \hat\sigma^-_je^{-\ii\Omega_j \tau_j}\,,
\end{align}
where $\hat\sigma^\pm$ are the $\mathfrak{su}(2)$ ladder operators with $\hat\sigma^+_j\ket{g_j}=\ket{e_j}$ and $\hat\sigma^-_j\ket{e_j}=\ket{g_j}$. 

The total unitary time-evolution for the detector-field system is given by the time-ordered exponential (in interaction picture) \cite{Tales2020GRQO,Bruno2021broken}
\begin{align}
    U = \mathcal{T}_t\exp\left[-\ii \int_\M \dd V h_{I,A}(\sx)+h_{I,B}(
    \sx) \right]\,.
\end{align}
Since the spacetime is globally hyperbolic, we can take the time ordering to be with respect to the global time function $t$. Without loss of generality we can set $t(\tau_{\textsc{a},0}) = \tau_{\textsc{a},0}=0$. 

At this point, depending on the problem at hand one may proceed to evaluate the time evoluation perturbatively or non-perturbatively. There is a great deal of flexibility when one chooses to work within perturbative regime, but there is mild causality violation and ``broken covariance'' whose origin can be traced to the combination of time-ordering and non-relativistic nature of the detector model \cite{Tales2020GRQO,Bruno2021broken}\footnote{As articulated in \cite{Yngvason2005vonneumann}, this should not come as a surprise or a source of concern, so long as one is aware of the regime of validity and approximations being used.}. Since our goal is to reformulate the Fermi problem in the clearest possible manner, we will adopt a non-perturbative approach based on \textit{delta-coupling}. 

\subsection{Delta coupling and the choice of coordinate systems}
The delta-coupling regime for UDW model is the regime where the interaction timescale between the detector and the field is assumed to be much faster than all the relevant timescales of the problem: that is, we model the interaction with each detector as taking place at a single instant in time (with respect to some time function, typically the detector's proper time or the global time function). This regime is particularly suited for the analysis of the Fermi two-atom problem. 

Mathematically, each detector interacts at some fixed $\tau_j=\tau_{j,0}$ in its own centre-of-mass rest frame. In terms of the detector's proper frame (e.g., using Fermi normal coordinates \cite{Tales2020GRQO,Bruno2021broken,Tales2022FNC}) the spacetime smearing can be factorized into products of switching functions and its spatial profile:
\begin{align}
    f_j(\sx) = \lambda_j\eta_j\delta(\tau_j-\tau_{j,0})F_j(\bar{\bx})
    \label{eq: delta-smearing}\,,
\end{align}
where we have written $\sx=(\tau,\bar{\bx})$ in the local coordinates of the detector and $F_j(\bar{\bx})$ gives the spatial profile of the detector. In FNC we also have that the centre of mass trajectory is $\sx_j(\tau)\equiv (\tau,\bm{0})$.

For the Fermi problem, we require that we can order Alice and Bob's local operations as being time-ordered with respect to some common time function $t$. Since the supports of Alice and Bob are assumed to not overlap due to compactly supported interaction regions, we can always do this unambiguously. In what follows we assume that the two detectors are on  static trajectories so that $t_j = t(\tau_j)$ (independent of the spatial coordinates $\bx$). In this case, it means that we need to be able to have $t(\tau_{\textsc{a},0})<t(\tau_{\textsc{b},0})$. Using the global time function $t$ coming from the foliation $\M\cong \R\times \Sigma_t$, the time function can be used to order the two local unitaries for each detector, so that the unitary $U$ reduces to a product of two simple unitaries $U=U_{\textsc{b}}U_{\textsc{a}}$, where
\begin{align}
    U_{j} &= \exp\left[-\ii \hat\mu_j(\tau_{j,0}) \otimes \hat Y_j \right]\,,\label{eq: unitary-delta}\\
    \hat Y_j &= \hat\phi(f_j) = \int_\M\dd V\,f_j(\sx)\hat\phi(\sx) \,,
    \label{eq: smeared-Y-free}
\end{align}
with $f_j(\sx)$ is given in Eq.~\eqref{eq: delta-smearing} and $t(\tau_{\textsc{a},0})<t(\tau_{\textsc{b},0})$. This unitary can be written as a sum of bounded operators
\begin{align}
    U_j = \openone \otimes \cos\hat Y_j-\ii\hat\mu_j(\tau_{j,0})\otimes\sin\hat Y_j\,,
\end{align}
where the smeared operator $\hat Y_j$ should read
\begin{align}
    \hat Y_j &= \lambda_j\eta_j\int_{\tau=\tau_{j,0}}\!\!\!\!\!\!\dd\Sigma \,F(\bar{\bx})\hat\phi(\sx(\tau_{j,0},\bar{\bx})) \,,
    \label{eq: smeared-Y}
\end{align}
where $\dd\Sigma$ is the induced volume element for constant-$\tau_j$ slices associated to the rest frame of the detectors.  We will leave it as $\hat Y_j = \hat \phi(f_j)$ to keep it manifestly coordinate-independent in the sense of Eq.~\eqref{eq: smeared-Y-free}, and indeed we do not need any detailed description of $\hat Y_j$ in what follows. For simplicity we assume that the constants $\eta_j$ (with dimension of [Length]) are equal.

Finally, we should state for completeness that the results of this work using delta-coupled detector is not strictly speaking at the level of rigour of pure mathematics. This is because the spacetime smearing $f_j(\sx)$ for delta coupling in Eq.~\eqref{eq: delta-smearing} does not belong to $\CS$. We have in mind physical intuition that delta coupling corresponds to very rapid interaction, i.e., an approximation of smooth compactly supported functions that are very localized in time\footnote{This is in the same spirit as to how Gaussian spacetime smearing, which is also not in $\CS$, gives very good approximation to compactly supported smearing functions because the tails are exponentially suppressed.}. Justifying this fully rigorously would take us too far away; however, one can adopt the gapless detector model as done in \cite{Landulfo2016magnus1} and make analogous calculations in what follows. The results can be shown to be analogous, except that now the notion of ground and excited states are harder to see because the energy levels are degenerate. Therefore, the use of AQFT in this work is mainly to ensure that the setting generalizes easily to arbitrary curved spacetimes and for any of the GNS representations, since canonical quantization somewhat obscures the essential physics of the Fermi two-atom problem.

\section{Fermi two-atom problem and relativistic causality}
\label{sec: Fermi}

We are now ready to set up the Fermi two-atom problem. Let us consider the initial state of the detector-field system to be given by initially uncorrelated state
\begin{align}
    \roo &= \rao \otimes \rbo \otimes \rof\,.
\end{align}
We are interested in what happens to Bob's detector given what happens to Alice's detector. Thus we are interested in the quantum channel $\Phi:\mathscr{D}(\mathcal{H}_\textsc{b})\to \mathscr{D}(\mathcal{H}_\textsc{b})$, where $\mathscr{D}(\mathcal{H}_\textsc{b})$ is the space of density matrices associated to Hilbert space of Bob's detector $\mathcal{H}_\textsc{b}$. The channel is naturally defined in the Stinespring representation
\begin{align}
    \Phi(\rbo) = \tr_{\textsc{a},\phi}(U\roo U^\dagger)\,. 
\end{align}
Note that this is different from the context of quantum communication where we are interested instead in the channel acting on Alice's state mapping into Bob's Hilbert space, which would be some channel $\mathcal{E}:\mathscr{D}(\mathcal{H}_\textsc{a}) \to \mathscr{D}(\mathcal{H}_\textsc{b})$.

We can calculate the channel explicitly in closed form. Since the expression for the full density matrix will be very useful for future calculations and in other contexts \cite{tjoa2022singlequbit}, we will expand them completely here. It is useful to organize the calculations in terms of $\rbo$:
\begin{widetext}
\begin{align}
    &\hat{\rho}_{\textsc{ab}\phi} = U(\rao\otimes\rbo\otimes\rof)U^\dagger \notag\\
    &\!\!= \rbo\otimes\rr{\rao\otimes C_\textsc{b}C_\textsc{a}\rof C_\textsc{a}C_\textsc{b} +
    \hat\mu_\textsc{a}\rao\hat\mu_\textsc{a}\otimes C_\textsc{b}S_\textsc{a}\rof S_\textsc{a}C_\textsc{b} +
    \ii\rao\hat\mu_\textsc{a}\otimes C_\textsc{b}C_\textsc{a}\rof S_\textsc{a}C_\textsc{b} -
    \ii\hat\mu_\textsc{a}\rao\otimes C_\textsc{b}S_\textsc{a}\rof C_\textsc{a}C_\textsc{b} }\notag\\
    &\!\!+ \hat\mu_\textsc{b}\rbo\hat\mu_\textsc{b}\otimes
    \rr{\rao\otimes S_\textsc{b}C_\textsc{a}\rof C_\textsc{a}S_\textsc{b} +
    \hat\mu_\textsc{a}\rao\hat\mu_\textsc{a}\otimes S_\textsc{b}S_\textsc{a}\rof S_\textsc{a}S_\textsc{b} +
    \ii\rao\hat\mu_\textsc{a}\otimes S_\textsc{b}C_\textsc{a}\rof S_\textsc{a}S_\textsc{b} -
    \ii\hat\mu_\textsc{a}\rao\otimes S_\textsc{b}S_\textsc{a}\rof C_\textsc{a}S_\textsc{b} }\notag\\
    &\!\!+ \ii \rbo\hat\mu_\textsc{b}\otimes
    \rr{\rao\otimes C_\textsc{b}C_\textsc{a}\rof C_\textsc{a}S_\textsc{b} +
    \hat\mu_\textsc{a}\rao\hat\mu_\textsc{a}\otimes C_\textsc{b}S_\textsc{a}\rof S_\textsc{a}S_\textsc{b} +
    \ii\rao\hat\mu_\textsc{a}\otimes C_\textsc{b}C_\textsc{a}\rof S_\textsc{a}S_\textsc{b} -
    \ii\hat\mu_\textsc{a}\rao\otimes C_\textsc{b}S_\textsc{a}\rof C_\textsc{a}S_\textsc{b} }\notag\\
    &\!\!-\ii \hat\mu_\textsc{b}\rbo\otimes
    \rr{\rao\otimes S_\textsc{b}C_\textsc{a}\rof C_\textsc{a}C_\textsc{b} +
    \hat\mu_\textsc{a}\rao\hat\mu_\textsc{a}\otimes S_\textsc{b}S_\textsc{a}\rof S_\textsc{a}C_\textsc{b} +
    \ii\rao\hat\mu_\textsc{a}\otimes S_\textsc{b}C_\textsc{a}\rof S_\textsc{a}C_\textsc{b} -
    \ii\hat\mu_\textsc{a}\rao\otimes S_\textsc{b}S_\textsc{a}\rof C_\textsc{a}C_\textsc{b} }\,,
\end{align}
\end{widetext}
using the shorthand $C_j \equiv \cos\hat Y_j$ and $S_j\equiv \sin\hat Y_j$, and here it is understood that $\hat\mu_j\equiv \hat\mu_j(\tau_{j,0})$ in order to alleviate notation. For an algebraic state associated to $\rof$ (which defines the distinguished folium of normal states associated to some algebraic state $\mathcal{H}_\omega$, \textit{c.f.} Section~\ref{sec: AQFT} or \cite{hollands2017entanglement}) let us define another shorthand\footnote{This shorthand is slightly different from the definition given in \cite{tjoa2022channel} but simpler to use in terms of the ordering of the operators.}
\begin{equation}
    \begin{aligned}
    \gamma_{ijkl} &\coloneqq \omega(X^{(i)}_\textsc{a} X^{(j)}_\textsc{b}X^{(k)}_\textsc{b}X^{(l)}_\textsc{a}) \\
    &\equiv \tr(\rof X^{(i)}_\textsc{a} X^{(j)}_\textsc{b}X^{(k)}_\textsc{b}X^{(l)}_\textsc{a}) 
    \end{aligned}
\end{equation}
where $i,j,k,l = c,s$ for cosine and and sine respectively, e.g., $X^{(c)}_\textsc{a}\equiv \cos\hat Y_\textsc{a}$. By taking partial trace over $A$ and $\phi$, we get
\begin{align}
    \Phi(\rbo) &= (\gamma_{cccc}+\gamma_{sccs}+\ii\alpha (\gamma_{sccc}-\gamma_{cccs}) )\rbo  \notag\\
    &+ (\gamma_{cssc}+\gamma_{ssss}+\ii\alpha (\gamma_{sssc}-\gamma_{csss}) ) \hat\mu_\textsc{b}\rbo\hat\mu_\textsc{b}\notag\\
    &+ \ii(\gamma_{cscc}+\gamma_{sscs}+\ii\alpha (\gamma_{sscc}-\gamma_{cscs}) ) \rbo\hat\mu_\textsc{b}\notag\\
    &- \ii(\gamma_{ccsc}+\gamma_{scss}+\ii\alpha (\gamma_{scsc}-\gamma_{ccss}) ) \hat\mu_\textsc{b}\rbo\,,
    \label{eq: channel-Bob}
\end{align}
where $\alpha = \tr(\hat\mu_\textsc{a}\rao)$. The coefficients can be computed straightforwardly for quasifree or Gaussian states using only properties of the Weyl algebra and Weyl relations \eqref{eq: Weyl-relations}. Note that as stated in Eq.~\eqref{eq: channel-Bob}, the result is valid for \textit{any} initial state of Alice and Bob, and furthermore it is also valid for any initial algebraic state of the field.

For the Fermi two-atom problem, the channel simplifies considerably because we are going to consider the field to be in the vacuum state, which is a subclass of quasifree Hadamard states. We will also set Alice's state to be in the excited state $\rao = \ketbra{e_\textsc{a}}{e_\textsc{a}}$. This gives us $\alpha=1$, and furthermore $\gamma_{ijkl}=0$ if there are odd number of sines and cosines. The channel thus reduces to
\begin{align}
    \Phi(\rbo) &= (\gamma_{cccc}+\gamma_{sccs}) \rbo + (\gamma_{cssc}+\gamma_{ssss}) \hat\mu_\textsc{b}\rbo\hat\mu_\textsc{b}\notag\\
    &-\ii(\gamma_{sscc}-\gamma_{cscs})  \rbo\hat\mu_\textsc{b} + \ii (\gamma_{scsc}-\gamma_{ccss}) \hat\mu_\textsc{b}\rbo\,.
    \label{eq: channel-Bob-quasifree}
\end{align}

Let us compute the coefficients. We first prove useful identities below, which will give very efficient shortcut for the computation.
\begin{lemma}
    We have the ``twisted'' product-to-sum formulae for Weyl algebra:
    \begin{subequations}
    \begin{align}
        2C_iC_j &= C_{i+j}e^{-\ii E_{ij}/2} + C_{i-j}e^{\ii E_{ij}/2} \,,\\
        -2S_iS_j &= C_{i+j}e^{-\ii E_{ij}/2} - C_{i-j}e^{\ii E_{ij}/2} \,,\\
        2C_iS_j &= S_{i+j}e^{-\ii E_{ij}/2} - S_{i-j}e^{\ii E_{ij}/2} \,,\\
        2S_iC_j &= S_{i+j}e^{-\ii E_{ij}/2} + S_{i-j}e^{\ii E_{ij}/2} \,,
    \end{align}
    \end{subequations}
    where $C_{i\pm j}\equiv \cos(\hat\phi(f_i\pm f_{j}))$, $S_{i\pm j}\equiv \sin(\hat\phi(f_i\pm f_{j}))$ and $E_{ij}\coloneqq E(f_i,f_j)$ is the smeared causal propagator. If $\supp(f_i)$ and $\supp(f_j)$ are spacelike separated, we have $E_{ij}=0$ and these reduce to the standard product-to-sum formula in trigonometry for complex numbers (or for commuting operators).
\end{lemma}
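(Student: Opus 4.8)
The plan is to reduce every product of cosines and sines of smeared field operators to Weyl generators and let the third Weyl relation in Eq.~\eqref{eq: Weyl-relations} supply the twisting phases automatically. Since $W(Ef_j) = e^{\ii\hat\phi(f_j)}$, I would first write each factor as
\begin{align}
    C_j = \tfrac{1}{2}\bigl(W(Ef_j) + W(-Ef_j)\bigr)\,, \quad
    S_j = \tfrac{1}{2\ii}\bigl(W(Ef_j) - W(-Ef_j)\bigr)\,,
\end{align}
so that each of the four products $C_iC_j$, $S_iS_j$, $C_iS_j$, $S_iC_j$ expands into a combination of four monomials of the form $W(\pm Ef_i)W(\pm Ef_j)$.

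The key step is to apply $W(Eh)W(Eh') = e^{-\frac{\ii}{2}E(h,h')}W(E(h+h'))$ to each monomial, using only bilinearity of $E(\cdot,\cdot)$ to pull the signs out of both the phase and the argument. This gives the two building blocks
\begin{align}
    W(\pm Ef_i)\,W(\pm Ef_j) &= e^{-\frac{\ii}{2}E_{ij}}\,W\bigl(\pm E(f_i+f_j)\bigr)\,, \notag\\
    W(\pm Ef_i)\,W(\mp Ef_j) &= e^{+\frac{\ii}{2}E_{ij}}\,W\bigl(\pm E(f_i-f_j)\bigr)\,,
\end{align}
where on the first line the two signs are equal and on the second they are opposite, the sign on the right-hand side following that of $f_i$. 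The crucial observation is that the phase depends only on whether the arguments add ($f_i+f_j$, phase $e^{-\ii E_{ij}/2}$) or subtract ($f_i-f_j$, phase $e^{+\ii E_{ij}/2}$), so after expanding a product the monomials sort cleanly into two phase classes.

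To finish, I would collect the monomials by their phase and reassemble the cosines and sines of the summed and differenced arguments using $W(E(f_i\pm f_j)) + W(-E(f_i\pm f_j)) = 2C_{i\pm j}$ and $W(E(f_i\pm f_j)) - W(-E(f_i\pm f_j)) = 2\ii\,S_{i\pm j}$. For $C_iC_j$ the $e^{-\ii E_{ij}/2}$ class gives $\tfrac{1}{2}e^{-\ii E_{ij}/2}C_{i+j}$ and the $e^{+\ii E_{ij}/2}$ class gives $\tfrac{1}{2}e^{+\ii E_{ij}/2}C_{i-j}$, which is the first identity; the other three drop out identically once the prefactors $\tfrac{1}{2\ii}$ from each sine are tracked, since they fix the relative signs between the two phase classes. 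The only real obstacle is the sign-and-phase bookkeeping---in particular always keeping $f_i$ in the first slot of $E(\cdot,\cdot)$ so that the antisymmetry $E(f_i,f_j)=-E(f_j,f_i)$ never has to be invoked. The closing remark of the lemma is then immediate: for spacelike-separated supports the CCR give $E_{ij}=0$, every twisting phase collapses to unity, and the four identities reduce to the ordinary product-to-sum formulae.
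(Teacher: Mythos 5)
Your proposal is correct and follows essentially the same route as the paper's own proof: decompose $C_j$ and $S_j$ into Weyl generators $W(\pm Ef_j)$, apply the third Weyl relation to extract the phases $e^{\mp\ii E_{ij}/2}$, and regroup the monomials into $C_{i\pm j}$ and $S_{i\pm j}$. The only cosmetic difference is that the paper works out just the $C_iC_j$ case and declares the rest analogous, whereas you organize the sign bookkeeping into two reusable building blocks that dispatch all four identities uniformly.
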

\begin{proof}
    We just prove for one case and the rest follows analogously. In terms of the Weyl algebra we have the algebraic exponential of the field operator
    \begin{align}
        W(Ef)\equiv e^{\ii\hat\phi(f)}\,,
    \end{align}
    so that $C_i = \frac{1}{2}(W(Ef_i)+W(-Ef_i))$. We have
    \begin{align}
        2C_iC_j &= \frac{1}{2}\Bigr[W(Ef_i)W(Ef_j)+W(-Ef_i)W(-Ef_j)\notag\\
        &\hspace{0.4cm}+W(Ef_i)W(-Ef_j)+W(-Ef_i)W(Ef_j)\Bigr]\notag\\
        &=\frac{1}{2}\Bigr[W(E(f_i+f_j))+W(-E(f_i+f_j))\Bigr]e^{-\ii E_{ij}/2}\notag\\
        &+\frac{1}{2}\Bigr[W(E(f_i-f_j))+W(-E(f_i-f_j))\Bigr]e^{\ii E_{ij}/2}\notag\\
        &=  C_{i+j}e^{-\ii E_{ij}/2} + C_{i-j}e^{\ii E_{ij}/2}\,.
    \end{align}
    In the second equality we have used the Weyl relations to extract the phase that depends on the causal propagator. Clearly, this reflects the non-Abelian nature of the Weyl algebra, and for the cases when the field operators commute, this reduces to the well-known trigonometric product-to-sum formula.
\end{proof}
With this, we can now compute the coefficients straightforwardly. For example, we have
\begin{align}
    \gamma_{cccc} &= \frac{1}{4}\omega(C_{\textsc{a}+\textsc{b}}^2+C_{\textsc{a}-\textsc{b}}^2+C_{\textsc{2a}}+C_{\textsc{2b}}\cos2E_{\textsc{ab}})\,,\\
    \gamma_{sccs} &=\frac{1}{4}\omega(S_{\textsc{a}+\textsc{b}}^2+S_{\textsc{a}-\textsc{b}}^2-C_{\textsc{2a}}+C_{\textsc{2b}}\cos2E_{\textsc{ab}})\,.
\end{align}
Using linearity of the algebraic state, the first coefficient in Eq.~\eqref{eq: channel-Bob-quasifree} reduces to
\begin{align}
    \gamma_{cccc}+\gamma_{sccs}&= \frac{1}{2}\rr{1+\nu_{\textsc{b}}\cos(2E_{\textsc{ab}})}\,,
\end{align}
where $\nu_\textsc{b} = \omega(W(2Ef_\textsc{b})) = e^{-2\mathsf{W}(f_{\textsc{b}},f_{\textsc{b}})}$. We can work out the rest:
\begin{align}
    \gamma_{cssc}+\gamma_{ssss} &= \frac{1}{2}\rr{1 - \nu_{\textsc{b}}\cos(2E_{\textsc{ab}})}\,,\\
    \gamma_{sscc}-\gamma_{cscs} &= \gamma_{scsc}-\gamma_{ccss} =  \frac{\ii}{2}\nu_\textsc{b}\sin(2E_\textsc{ab})\,,
\end{align}
hence we can write 
\begin{align}
    \Phi(\rbo) &= \frac{1 + \nu_{\textsc{b}}\cos(2E_{\textsc{ab}})}{2}\rbo + \frac{1 - \nu_{\textsc{b}}\cos(2E_{\textsc{ab}})}{2} \hat\mu_\textsc{b}\rbo\hat\mu_\textsc{b}\notag\\
    &\hspace{0.4cm}+\frac{\nu_\textsc{b}\sin(2E_\textsc{ab})}{2} [\rbo,\hat\mu_\textsc{b}]\,.
    \label{eq: channel-Bob-quasifree-2}
\end{align}

At this point, we are ready to see how the Fermi two-atom problem looks like. It is now clear that from Eq.~\eqref{eq: channel-Bob-quasifree-2}, the only way detector $A$ can influence detector $B$'s excitation is via the causal propagator $E_{\textsc{ab}}$. If $A$ and $B$ are spacelike separated, then $E_{\textsc{ab}}=0$, in which case we get
\begin{align}
    \Phi(\rbo) &\to  (1-p)\rbo + p\, \hat\mu_\textsc{b}\rbo\hat\mu_\textsc{b}\,.
    \label{eq: channel-Bob-quasifree-spacelike}
\end{align}
This is nothing but a Pauli channel with probability of applying Pauli operator $\hat\mu_\textsc{b}\equiv \hat\mu_\textsc{b}(\tau_{0,\textsc{b}})$
given by
\begin{align}
    p = \frac{1 - \nu_{\textsc{b}}}{2} = \frac{1 - e^{-2\mathsf{W}(f_\textsc{b},f_{\textsc{b}})}}{2}\,.
\end{align}
As an example, if $\hat\mu_\textsc{b}=\hat{\sigma}^x_\textsc{b}$ then this is just the bit-flip channel. What this means is that the channel only depends on the local field fluctuations around Bob's qubit, since the probability only depends on smeared Wightman two-point function  $\mathsf{W}(f_\textsc{b},f_\textsc{b})$ where $B$ is located. If we now pick the initial state of detector $B$ to be in the ground state, then the excitation probability\footnote{Actually, since the state $\Phi(\rbo)$ is in interaction picture, we should convert this to Schr\"odinger picture first. However, it does not matter since this conversion is a unitary constructed out of the free Hamiltonian of detector $B$ and it cannot change probabilities, so all the conclusions in what follows are unchanged.}
\begin{align}
    \Pr(\ket{g_\textsc{b}}\to\ket{e_\textsc{b}}) &= |\braket{e_\textsc{b}|\Phi(\ketbra{g_\textsc{b}}{g_\textsc{b}})|e_\textsc{b}}|^2
\end{align}
which is purely a function of $f_\textsc{b}$ and is independent of $f_\textsc{a}$. In fact, this conclusion is  \textit{independent} of the initial state of $B$: we only pick the initial state to match the Fermi two-atom problem. Furthermore, we are able to arrive at this conclusion without performing any integral or Dyson series expansion!

If the detectors are causally connected via the field, i.e., when $E_\textsc{ab}\neq 0$, then the story is different, since now the final state of detector depends on detector $A$'s support (recall that we assumed in the Fermi problem that the state of detector $A$ is fixed to be $\rao=\ketbra{e_\textsc{a}}{e_\textsc{a}}$). Hence indeed the excitation probability will now depend on both $f_\textsc{a}$ and $f_\textsc{b}$. This cleanly (re)affirms the causal behaviour of relativistic quantum field and its interactions with atomic-like systems, or any incarnation of light-matter interactions involving quantum fields.

We emphasize that we reached these conclusions about causal behaviour of the light-matter interaction with a quantum field \textit{without} performing a single integration over any field modes and fully non-perturbatively, hence the outcome in terms of the qubit channel \eqref{eq: channel-Bob-quasifree-2} is very transparent. Furthermore, notice that since we have expressed this in terms of causal propagator $E_\textsc{ab}\equiv E(f_\textsc{a},f_\textsc{b})$ without making any reference to a particular choice of globally hyperbolic spacetime $(\M,g_\textsc{ab})$, or any trajectory\footnote{We picked static trajectories because the time ordering is straightforward and we do not need general trajectories for Fermi problem, but really this choice needs not be made so long as one is careful about the spatial dependence of monopole operators.} or interaction profile $f_j$ of the detectors for that matter (although we picked instantaneous switching which is suited for this problem). Consequently, this conclusion remains true even if the spacetime has highly non-trivial causal structure, such as Schwarzschild background where the causal propagator cannot even be easily written in closed form \cite{Casals2020commBH}. We are not even choosing the kind of monopole operator of the detector: all we needed (and achieved) is to show that nothing about detector $A$ influences the density matrix of detector $B$ in as effortless manner as possible.

\section{Effect of projective measurement on Alice's detector}
\label{sec: measurement}

In this section we briefly discuss what happens when we allow detector-based measurement into the problem. Suppose that after interaction with the field, we want to measure detector $A$ and making sure that it is in the ground state. We perform projective measurement on detector $A$ which induces a POVM on the field; recall that we cannot perform projective measurement directly on the field as it violates relativistic causality \cite{sorkin1956impossible,josepolo2022measurement}. For simplicity we just consider the case when we project to the ground state $P=\ketbra{g_\textsc{a}}{g_\textsc{a}}$ while the detector is initially in excited state. 

Following the procedure outlined in \cite{josepolo2022measurement}, the density matrix of the field after projective measurement to detector $A$ becomes
\begin{align}
    {\tilde{\rho}}_\phi &= \frac{\tr_\textsc{ab}[(P\otimes\openone)U_\textsc{a}\roo U_\textsc{a}^\dagger(P\otimes\openone) ]}{\tr[(P\otimes\openone)U_\textsc{a}\roo U_\textsc{a}^\dagger]}\,.
\end{align}
For the choice of projectors, initial state, and the monopole $\hat\mu_\textsc{a}=\hat\sigma^x_\textsc{a}$, the expression simplifies greatly to
\begin{align}
    \tilde{\rho}_\phi &= \frac{S_\textsc{a}\rof S_\textsc{a}}{\tr \rof S_\textsc{a}^2}\,.
\end{align}
However, this updated state should only apply to observers in the causal future of detector $A$: if detector $B$ is spacelike separated from $A$, then Bob cannot learn about the measurement outcome of detector $B$ and the final state remains the same \cite{josepolo2022measurement}, i.e.,
\begin{align}
    \tilde{\rho}_\phi &= C_\textsc{a}\rof C_\textsc{a}+S_\textsc{a}\rof S_\textsc{a}\,,
\end{align}
which is essentially the statistical mixture of the two projective measurement outcomes. This state will give exactly the same result as if no measurement has been performed on Alice's detector.

In terms of algebraic state, the interpretation is somewhat simpler in the following sense: using the fact that $\omega$ and $\rho$ are related as $\omega(A)\equiv \tr(\hat\rho A)$ for Weyl element $A=W(Eg)$ with $\supp(g)\subseteq J^+(\supp(f_\textsc{a}))$, we can write 
\begin{align}
    \tilde{\omega}(A) &=
    \frac{\omega(S_\textsc{a}AS_\textsc{a})}{\omega(S_\textsc{a}^2)}\,,
\end{align}
otherwise $\tilde{\omega}(A)=\omega(A)$ when $\supp(g)\not\subseteq J^+(\supp(f_\textsc{a}))$. Here we assume that Alice performs the measurement directly after interaction with the field for convenience. What we are updating is simply the expectation values of the field observables in the causal future of the measurement process (as argued in \cite{josepolo2022measurement}). Note that this state update is not directly related to the causal propagator since in general $\supp(Ef_\textsc{a})\subset J^+(\supp(f_\textsc{a}))$ as a proper subset, i.e., we allow the possibility that Bob can learn about the measurement some other way (even classically without going through the quantum field). For example, Bob can be timelike-separated from Alice in Minkowski space (hence can learn about Alice's measurement in principle), but if the field used in the Fermi problem is massless then $E(f_\textsc{a},f_\textsc{b})=0$ also for timelike-separated regions, since the support of $E$ is localized along the null directions. The point is that the state update above is ``outside'' AQFT framework itself: for free field theory it is an extra rule induced from detector-based measurement theory.

If Bob is in the causal future of Alice, then by going through analogous computation to arrive at our earlier result and writing $\tilde\rho_{\textsc{ab}\phi}=\ketbra{e_\textsc{a}}{e_\textsc{a}}\otimes\rbo\otimes\tilde\rho_\phi$, 
we see that Bob's interaction with the field gives rise to a different channel
\begin{align}
    \tilde{\Phi}(\rbo) &= \tr_{\textsc{a}\phi}(U_\textsc{b}\tilde\rho_{\textsc{ab}\phi}U_\textsc{b}^\dagger)\notag\\
    &= \tilde{\omega}(C_\textsc{b}^2)\rbo + \tilde{\omega}(S_\textsc{b}^2)\hat \mu_\textsc{b}\rbo \hat\mu_\textsc{b}+ \ii\tilde{\omega}(S_\textsc{b}C_\textsc{b})\rbo \hat\mu_\textsc{b}  \notag\\
    &\hspace{1.75cm} -\ii \tilde{\omega}(C_\textsc{b}S_\textsc{b})\rbo  \hat\mu_\textsc{b}\,,
\end{align}
where $\supp(f_\textsc{b})
\subseteq J^+(\supp f_\textsc{a})$, but not necessarily $E(f_\textsc{a},f_\textsc{b})\neq 0$. In particular, supposing that $\hat\mu_\textsc{b}=\hat\sigma^x_\textsc{b}$, the probability of finding the detector in the excited state is given by
\begin{align}
    \tilde{\Pr}(\ket{g_\textsc{b}}\to\ket{e_\textsc{b}}) &= |\braket{e_\textsc{b}|\tilde{\Phi}(\ketbra{g_\textsc{b}}{g_\textsc{b}})|e_\textsc{b}}|^2 \notag\\
    &= |\tilde\omega(S_\textsc{b}^2)|^2 \notag\\
    &\neq \Pr(\ket{g_\textsc{b}}\to\ket{e_\textsc{b}})\,.
\end{align}
This difference arises because even if the causal propagator $E(f_\textsc{a},f_\textsc{b})=0$, the two channels with or without measurement are not the same, i.e., $\tilde{\Phi}(\rbo)\neq\Phi(\rbo)$. This has to do with the fact that Bob needs not learn about Alice's measurement outcome using the scalar field that the two atoms are involved in this setup. Having more information about Alice's state implies that Bob learns more about the state of the field before turning on detector $B$. So even if detector $B$ is not influenced by the excitation of detector $A$ that gets propagated by the field, Bob's result depends on whether he learns about Alice's measurement procedure. 

In any case, even with local measurement induced by the detector, the whole setting still respects Einstein causality simply because Bob cannot learn about Alice's measurement outcome without following relativistic principle. In particular, if Bob is spacelike separated from Alice's measurement process, then Bob can basically treat the situation as if measurement did not occur and the calculation in the previous section just carries forward as-is. We then simply recover the result in previous section where detector $B$ is completely independent of what happens to detector $A$.

\section{Discussion and outlook}
\label{sec: conclusion}

To summarize, we have reframed the Fermi two-atom problem using Unruh-DeWitt detector model, but this time we approached the problem using fully non-perturbative method via delta-coupling. Indeed, the delta-coupling approach is naturally suited for the Fermi problem. We showed that using the tools from (relativistic) quantum information theory, the calculation provides a much cleaner and transparent approach to demonstrate the causal nature of the detector-field interaction in Fermi two-atom problem that is (1) valid for arbitrary globally hyperbolic curved spacetimes, (2) independent of the detector configuration (monopole operator) of detector $B$, and (3) fully non-perturbatively, without any use of perturbation theory or complicated integration. All we used is simply the properties of the algebra of observables of the field theory, the choice of quasifree Hadamard state (which is the vacuum for the Fermi two-atom problem), and the causal relation between the two local  interaction regions $f_\textsc{a},f_\textsc{b}$ of each detector with the field.

It is worth stressing that in the physics community familiar with relativistic quantum field theory, this result --- that there is no causality violation in the Fermi two-atom problem --- is far from surprising and should be \textit{required} to be true. Furthermore, recall that the conclusion that the Fermi two-atom setup should be causal is not new: \cite{Cliche2010channel, Jonsson2014signalQED,Leon2011fermi,Causality2015Eduardo} are good representative examples where perturbative methods are computed reliably enough to arrive at the same conclusion. Indeed, in \cite{Cliche2010channel} the setup was shown to be causal to all orders in perturbation theory by treating the Fermi problem as a communication via quantum channel between the two parties. What we have done amounts to cleaning up the approach in a way that is nearly effortless, fully compatible with quantum information framework, and generalizes easily to curved spacetimes, where the causal propagator (smeared field commutator) can be very difficult to construct explicitly. 

Let us briefly go through several generalizations that could have been easily done in this work. First, note that we have only used the \textit{quasifree} nature of the field state. We did not even use directly the Hadamard property, since for the Fermi two-atom problem it is somewhat inconsequential what the value of the smeared Wightman two-point function $\mathsf{W}(f_\textsc{b},f_\textsc{b})$ is. Therefore, our results actually will still be true even if the field is not in the vacuum state: all that matters is that the probability of the Pauli channel $p$ only depends on $f_\textsc{b}$. The choice of vacuum state merely reflects the original setup of the Fermi two-atom problem. In particular, any other quasifree pure states with vanishing odd-point functions will automatically work verbatim without a single extra computation needed since it only changes the value of $p$: this includes, for instance, (squeezed) thermal states. It should be of no surprise, after this demonstration, that the Fermi two-atom problem should be generalizable to arbitrary physically reasonable (Hadamard) field states, although the computation may become more involved---the quasifree states (for which the vacuum state is included) made the calculation essentially effortless, with no integral to be performed to arrive at the result. 

The non-perturbative method adopted here also allowed for very nice inclusion of the recently proposed local measurement theory for QFT which is based on  positive-valued operator measure (POVM) induced by UDW detectors \cite{josepolo2022measurement}. Although not needed for the Fermi problem, we considered a slightly tweaked problem where we want to include measurement step: that is, what happens to $B$ if $A$ is \textit{measured} to be in the ground state after interaction with the field\footnote{Note that ``measurement'' here means something precise in quantum information in terms of measurement operators and POVM elements as understood in quantum information \cite{nielsen2000quantum,Wilde2013textbook}, not just computation of transition amplitudes as done in \cite{Fermi1932radiation,Hegerfeldt1994fermi}.}. The calculations done here gave us a cleaner calculation than \cite{josepolo2022measurement} when restricted to delta-coupling regime. While we did not consider more complicated situations with multiple measurements and more than two observers for Fermi problem, we believe that the approach adopted here would be useful for protocols where detector-based local measurement theory for the field would become important component. We leave this for future work.

Last but not least, it would also be interesting to explore the benefit of using such non-perturbative approach on non-standard settings, such as the scenarios considered in \cite{Menezes2017disordered} where the causal structure itself is somewhat fuzzy, or when quantum reference frames (QRFs) are employed.  We also see no obstruction to generalize this to arbitrary detector system such as qudits or harmonic oscillators, as well as any type of relativistic quantum fields: so long as the detectors interact locally with the field and the field obeys relativistic (Einstein) causality, it will lead to analogous conclusions. We leave these for future investigations as well. As far as we are concerned, we believe this is the most general setting we need to consider in the Fermi two-atom problem, and the use of quantum information language allowed us to provide a neater presentation of the problem for more quantum information-oriented readers.

\section*{Acknowledgment}

E. T. is grateful for insightful correspondence with Maximillian H. Ruep and in particular Tales Rick Perche for (very) useful discussions regarding Fermi normal coordinates. E. T. acknowledges that this work is made possible from the funding through both of his supervisors  Robert B. Mann and Eduardo Mart\'in-Mart\'inez. This work is supported in part by the Natural Sciences and Engineering Research Council of Canada (NSERC).  The funding support from Eduardo Mart\'in-Mart\'inez is through the Ontario Early Research Award and NSERC Discovery program. The work has been performed at the Institute for Quantum Computing, University of Waterloo, which is supported by Innovation, Science and Economic Development Canada. This work is partially conducted at University of Waterloo and Institute for Quantum Computing, which lie on the traditional territory of the Neutral, Anishinaabeg, and Haudenosaunee Peoples. The University of Waterloo and the Institute for Quantum Computing are situated on the Haldimand Tract, land that was promised to Six Nations, which includes six miles on each side of the Grand River. E. T. would also like to thank both supervisors for providing the opportunity to explore some research ideas independently, through which this work becomes possible.

\bibliography{Fermi}
\end{document}